\newtheorem{theorem}{Theorem}
\newtheorem{corollary}{Corollary}
\newtheorem{remark}{Remark}
\newtheorem{definition}{Definition}
\providecommand{\U}[1]{\protect\rule{.1in}{.1in}}
\newtheorem{thm}{Theorem}[section]
\newtheorem{prop}[thm]{Proposition}
\begin{document}
\IEEEoverridecommandlockouts

\author{
    \IEEEauthorblockN{Mustafa An{\i}l Ko\c{c}ak\IEEEauthorrefmark{1}, Elza Erkip\IEEEauthorrefmark{2}}
    \IEEEauthorblockA{Dept. of ECE, NYU Polytechnic School of Engineering Brooklyn, NY, USA \\
   kocak@nyu.edu\IEEEauthorrefmark{1}, elza@poly.edu\IEEEauthorrefmark{2}}
}

\title{\vspace{6.0mm} Communicating Lists Over a Noisy Channel}


\maketitle

\begin{abstract}
This work considers a communication scenario where the transmitter chooses a list of size $K$ from a total of $M$ messages to send over a noisy communication channel, the receiver generates a list of size $L$ and communication is considered successful if the intersection of the lists at two terminals has cardinality greater than a threshold $T$. In traditional communication systems $K=L=T=1$. 
The fundamental limits of this setup in terms of $K$, $L$, $T$ and the Shannon capacity of the channel between the terminals are examined. Specifically, necessary and/or sufficient conditions for asymptotically error free communication are provided.
\end{abstract}

\section{Introduction}

In the classical formulation of communication over a noisy channel \cite{Shannon}, the transmitter chooses a message from the set of possible messages, encodes and transmits the corresponding codeword over the channel. The receiver observes the channel output and decodes the chosen message. If the decoded message  is not same as the transmitted one, we declare an error. Alternatively, if the receiver cannot decide reliably the transmitted message, it can choose to produce a list of possible messages. This is known as \emph{list decoding}, where the main goal is to achieve a trade-off between the probability of error, defined as the transmitted message not being in the list, and the list size. An excellent treatment of list decoding can be found in \cite{forney} and references therein. Error exponent of list decoding is recently presented in \cite{merhav}, which also includes a survey of the literature.

In this work, we introduce a more general communication setup in which the transmitter chooses a subset of size $K$ of all possible $M$ messages. The receiver forms a list (set) size $L$ and communication is considered successful if the size of the intersection of the transmitter and receiver sets is larger than a certain threshold $T$. Note that $K=L=T=1$ corresponds to the classical approach whereas $K=T=1$ corresponds to the list decoding approach. 

This kind of communication may appear in various scenarios; for example in a wireless  network to alert a user about $K$ available resource blocks among $M$ possible ones, where the user is interested in a total of $T$ of the resource blocks and is willing to go through a list of $L$. Another possible scenario can be an internet search engine generating an unordered list of links size $K$ for a query which is transmitted over a noisy channel. The user is presented with a list of $L$ links and is satisfied as long as any $T$ of the $L$ results presented are relevant to his query. 

In this paper, we are interested in conditions on $M$, $K$, $L$ and $T$ that guarantee an asymptotically vanishing probability of communication failure. We investigate such necessary and sufficient conditions, and regions for which the necessary and sufficient conditions become equivalent, thereby providing tight bounds. We argue that the necessary and sufficient conditions developed in this paper depend on the channel only through its Shannon capacity. 

\subsection{Notation}
The following notation will be used throughout the paper:
\begin{itemize}
\item We use calligraphic capital letters for sets, and bold items for vectors.
\item For any positive integer $k$, $\left[k\right]$ stands for the set of all positive integers smaller or equal to $k$, i.e.  $\left[k\right] = \{1,\ldots,k\}$.
\item For any set $\mathcal{A}$, and any integer $k \leq \left|\mathcal{A}\right|$, ${\mathcal{A} \choose k}$ stands for the set of all size $k$ subsets of $\mathcal{A}$, i.e. $\binom{\mathcal{A}}{k}= \{ \mathcal{B}: \mathcal{B} \subseteq \mathcal{A}, \left|\mathcal{B}\right| = k\}$.
\item We denote the set of permutations of a given set $\mathcal{A}$ with $\Psi_\mathcal{A}$. Explicitly, any $\psi \in \Psi_\mathcal{A}$ is a bijective function from $\mathcal{A}$ to itself. The corresponding element of $a\in \mathcal{A}$ under this permutation is denoted by $\psi\left(a\right)$, and lastly for any $\mathcal{B} \subseteq \mathcal{A}$, $\psi\left(\mathcal{B}\right) = \{x : x = \psi\left(b\right), b\in \mathcal{B} \}$.
\item We use the standard notation for the equality and inequality up to first exponent, with respect to the natural base $e$, i.e. $f_n \dot{=} g_n$ stands for $\lim _{n\rightarrow\infty}\frac{1}{n}\log\frac{f_n}{g_n} = 0$, also $\dot{>}, \dot{<}, \dot{\geq}, \dot{\leq}$ are defined similarly as in \cite{cover}.
\end{itemize}

\section{Problem Definition}

The problem setup consists of two terminals and a channel in between characterized by  $\left(\mathcal{X},\mathcal{Y},\mathcal{W}\left(y^n|x^n\right)\right)$, where $\mathcal{X}$ denotes the input alphabet, $\mathcal{Y}$ denotes the output alphabet and $\mathcal{W}\left(y^n|x^n\right)$ is the transition probability of the channel for a block length $n$. In the following, we will denote the channel by $\left(\mathcal{X},\mathcal{Y},\mathcal{W}\right)$ for the sake of brevity. We assume the channel has a Shannon capacity denoted (in nats) by $C$. The first terminal (transmitter) chooses $K$ messages from the set of all possible messages, $\{1,\ldots,M\}$, and transmits a codeword of length $n$, $x^n \in \mathcal{X}^n$, through the channel. The second terminal (receiver), observes the channel output, $y^n\in \mathcal{Y}^n$, and generates a list of $L$ messages. Communication will be considered successful if the intersection of the estimated list and the set of chosen messages by the transmitter has cardinality larger than a given threshold $T$. We assume all the parameters $M,K,L,T$ are nondecreasing positive functions of block length $n$. 

The following definitions formalize the above setup.

\begin{definition}
An $\left(M,K,L,T,n\right)$ \em list encoding/decoding (LED) code \em for a given channel  $\left(\mathcal{X},\mathcal{Y},\mathcal{W}\right)$  consists of the following:
\begin{itemize}
\item Set of all possible messages: $\{1,\ldots, M\}$. 
\item Encoding function $f_n: {\left[M\right] \choose K}\longrightarrow \mathcal{X}^n$, which maps the chosen subset to channel inputs.
\item Decoding function $g_n: \mathcal{Y}^n \longrightarrow \binom{\left[\mathcal{M}\right]}{L}$, which generates the estimated message list from  channel output.
\item Overlap threshold: $T$, which is the success criterion for the designed code. 
\end{itemize}

\end{definition}

\begin{remark}
When $K = 1$, we call a LED code a \em list decoding (LD) code. \em When $K = L = 1$ we call a LED code a \em classical code.\em 
\end{remark}

\begin{definition}
An error is said to occur if and only if the size of the intersection of the decoded list and the
set of chosen messages is smaller than $T$. The probability of error when the message set $\Lambda \in {\left[M\right] \choose K}$ is sent can be expressed as:
\begin{eqnarray}
\lambda _ \Lambda ^{\left( n \right)} &=& 
 Pr\{\left| g_n\left(Y^n\right) \cap \Lambda \right| < T \quad  | X^n = f_n\left(\Lambda\right)\}. \nonumber
\end{eqnarray}

The average error probability is defined as

\begin{eqnarray}
  \lambda_{avg} ^{\left(n\right)} &=& \frac{1}{{M \choose K}}\sum_{\Lambda \in {\left[M\right] \choose K}} \lambda_\Lambda ^{\left(n\right)}. \nonumber
\end{eqnarray}

\end{definition}

\begin{definition}
A family of $\left(M,K,L,T,n\right)$ codes for a given channel $\left(\mathcal{X},\mathcal{Y},\mathcal{W}\right)$ is called a \emph{feasible family} if $\lambda_{avg}^{\left(n\right)}\rightarrow 0$ as $n \rightarrow \infty$.
\end{definition}

Our goal in this paper is to identify necessary and sufficient conditions for $\left(M,K,L,T,n\right)$ to obtain a feasible family of codes.

\section{Necessary and Sufficient Conditions for Feasibility}\label{sec3}

In this section, we present our main results. Specifically, Theorem 1 presents a combinatorial inequality which provides a sufficient condition on code parameters for the existence of a feasible family of codes.
Correspondingly, in Theorem 2 we describe a necessary condition that any feasible family of codes should satisfy, akin to converse results in classical coding. While our necessary and sufficient conditions are not tight in general, Section IV provides some important special cases for which the conditions become equivalent.

\begin{theorem}{(Sufficient condition for feasibility)} \label{thmsc}
There exists a feasible family of $\left(M,K,L,T,n\right)$ LED codes if 
\begin{eqnarray}
        \lim_{n \rightarrow \infty}\frac{1}{n} \log  \frac{\binom{M}{L}}{\sum_{i = T}^{\min\{K,L\}}\binom{K}{i}\binom{M-K}{L-i}} & < & C \nonumber
        \end{eqnarray}
where $C$ is the Shannon capacity of the  channel $\left(\mathcal{X},\mathcal{Y},\mathcal{W}\right)$.
\end{theorem}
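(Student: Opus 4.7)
My plan is to reduce the existence of a feasible LED code to Shannon's classical channel coding theorem via a two-layer construction. Write $R$ for the limit on the left-hand side of the hypothesis, so by assumption $R < C$, and fix $\epsilon > 0$ with $R + \epsilon < C$.

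\textbf{Step 1 (covering lemma).} The first step is a probabilistic covering argument: I would show there exists a collection $\mathcal{L}_n = \{\hat{\Lambda}_1, \ldots, \hat{\Lambda}_{M'}\} \subseteq \binom{[M]}{L}$ of size $M' \, \dot{=} \, e^{n(R+\epsilon/2)}$ such that all but a vanishing fraction of $\Lambda \in \binom{[M]}{K}$ admit some $\hat{\Lambda}_m \in \mathcal{L}_n$ with $|\hat{\Lambda}_m \cap \Lambda| \geq T$. To prove this, draw the $\hat{\Lambda}_m$ independently and uniformly from $\binom{[M]}{L}$. For a fixed $\Lambda$ the probability that a single random $L$-subset meets it in at least $T$ elements is $p = \sum_{i=T}^{\min\{K,L\}}\binom{K}{i}\binom{M-K}{L-i}/\binom{M}{L}$, so the expected fraction of uncovered $\Lambda$'s is $(1-p)^{M'} \leq e^{-M'p}$. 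The hypothesis gives $p \, \dot{=} \, e^{-nR}$, whence $M' p \, \dot{=} \, e^{n\epsilon/2} \to \infty$. Markov's inequality then extracts a deterministic $\mathcal{L}_n$ whose uncovered fraction tends to zero (padding with arbitrary extra $L$-subsets if duplicates arise).

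\textbf{Step 2 (inner classical code).} Since $\tfrac{1}{n}\log M' = R + \epsilon/2 < C$, Shannon's channel coding theorem supplies a classical $(M', n)$ code $(\phi_n, \psi_n)$ over $(\mathcal{X}, \mathcal{Y}, \mathcal{W})$ with average error probability $\epsilon_n \to 0$.

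\textbf{Step 3 (composition and error analysis).} Finally I would stitch the two layers together: for each covered $\Lambda$ let $m(\Lambda)$ be any index with $|\hat{\Lambda}_{m(\Lambda)} \cap \Lambda| \geq T$, extending $m(\cdot)$ arbitrarily on the (vanishingly few) uncovered $\Lambda$'s. Set $f_n(\Lambda) = \phi_n(m(\Lambda))$ and $g_n(y^n) = \hat{\Lambda}_{\psi_n(y^n)}$. Whenever $\Lambda$ is covered and the inner decoder correctly recovers $m(\Lambda)$, the condition $|g_n(Y^n) \cap \Lambda| \geq T$ holds automatically, so $\lambda_{\mathrm{avg}}^{(n)} \leq \epsilon_n + (\text{uncovered fraction}) \to 0$.

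\textbf{Main obstacle.} The principal subtlety is calibrating the covering lemma at the correct exponential scale. A naive attempt to cover \emph{every} $\Lambda$ would incur an extra $\log\binom{M}{K}$ factor in $M'$ via the union bound, which need not be subexponential without additional growth assumptions on $M$ and $K$. Relaxing to covering \emph{almost every} $\Lambda$ by Markov's inequality is the crucial trick, and verifying that the resulting exponent matches the hypothesis sharply — so that $M' p \to \infty$ coexists with $\tfrac{1}{n}\log M' < C$ — is the technical heart of the proof.
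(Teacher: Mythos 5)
Your proposal is correct and follows essentially the same route as the paper: both draw the $L$-subsets i.i.d.\ uniformly, identify the hypergeometric covering probability $p=\sum_{i=T}^{\min\{K,L\}}\binom{K}{i}\binom{M-K}{L-i}/\binom{M}{L}$, bound the miss probability by $(1-p)^{M'}\leq e^{-M'p}$, and pass the surviving index through a classical code of rate strictly between the hypothesis exponent and $C$. The only (immaterial) difference is that you derandomize the covering collection separately via Markov before invoking Shannon's theorem, whereas the paper keeps a single random-code ensemble, uses symmetry to equate the expected average error with the error of one fixed message set, and extracts a good code at the end.
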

\begin{proof}
We prove this result by constructing a fixed composition random code using a family of classical codes that achieves rate $R$ for the channel $\left(\mathcal{X},\mathcal{Y},\mathcal{W}\right)$. We denote this code family with $\{\left(\mathbf{u}_i,\mathcal{D}_i\right): i=1,\ldots,2^{nR}\}$, where $\mathbf{u}_i$'s are the codewords, $\mathcal{D}_i$'s are the decoding regions, and $R$ is the rate of the code \cite{cover}. Note that $R < C$.

\emph{Codebook Generation:} We pick $2^{nR}$ size $L$ subsets of $\left[M\right]$ i.i.d. with a uniform distribution over all subsets of size $L$ and denote them with $\{\mathcal{N}_1,\ldots,\mathcal{N}_{2^{nR}}\}$. Note these subsets are not necessarily distinct.  

\emph{Encoding:} For a given encoder input $\Lambda \in \binom{\left[M\right]}{K}$, we encode $\Lambda$ to  $\mathbf{u}_i$, such that $\left|\Lambda \cap \mathcal{N}_i\right| \geq T$. If there are more than one such $\mathcal{N}_i$'s we pick any one of them, and if such a $\mathcal{N}_i$ does not exists then we declare an encoding error denoted by the event $E_{enc}$.

\emph{Decoding:} At the receiver if the received channel output $y^n \in \mathcal{D}_j$, the receiver declares $\mathcal{N}_j$ as the estimated list. Note that if $y^n \in \mathcal{D}_j$ while the transmitted codeword is $\mathbf{u}_i$ for some $i\neq j$,
then we cannot guarantee $\left|\Lambda \cap \mathcal{N}_j\right| \geq T$ and we declare a decoding error. This event is denoted via $E_{dec}$.

\emph{Error Analysis:} The probability of error averaged over the ensemble of codes can be calculated as:
\begin{eqnarray}
E\left[\lambda_{avg}^{\left(n\right)}\right] & = &  E\left[\lambda_{\left[K\right]}^{\left(n\right)}\right]\label{three} \\ 
& \leq &  P\left(E_{dec} \right) + P\left(E_{enc} \right)  \label{four}\\
& = & P\left(E_{dec} \right) + \ldots \nonumber \\ & & +  \left(1 - \frac{\sum_{i = T}^{\min\{K,L\}}\binom{K}{i}\binom{M-K}{L-i}}{\binom{M}{L}}\right)^{2^{nR}} \nonumber
\\ & \leq &  P\left(E_{dec} \right) + e^{-2^{nR}  \frac{\sum_{i = T}^{\min\{K,L\}}\binom{K}{i}\binom{M-K}{L-i}}{\binom{M}{L}} } \label{six}
\end{eqnarray}
Eq. (\ref{three}) and (\ref{four}) follows from the symmetry of the code construction and the union bound, and (\ref{six}) follows from the simple inequality: $1-t \leq e^{-t}$ for all $t>0$.

Since the code $\{\left(\mathbf{u}_i,\mathcal{D}_i\right) : i = 1,\ldots, 2^{nR} \}$ is chosen to achieve rate $R$ for the classical communication problem,   $P\left(E_{dec} \right) \rightarrow 0$ as $n\rightarrow \infty$ for $R < C$. Furthermore, the second term also converges to zero as $n$ approaches to infinity, if 
\begin{eqnarray}
        \lim_{n \rightarrow \infty}\frac{1}{n} \log  \frac{\binom{M}{L}}{\sum_{i = T}^{\min\{K,L\}}\binom{K}{i}\binom{M-K}{L-i}} & < & R. \nonumber
        \end{eqnarray}
        
Thus, under the hypothesis of the theorem, we can always pick an appropriate $R$ to satisfy both of the conditions. Since the expected probability error over the ensemble can be made  asymptotically small, we conclude there exists at least one feasible code family in the ensemble with asymptotically zero error probability. 
\end{proof}

\begin{theorem}{(Necessary condition for feasibility)} \label{thmnc}
For any feasible family of $\left(M,K,L,T,n\right)$ LED codes for a channel $\left(\mathcal{X},\mathcal{Y},\mathcal{W}\right)$ with Shannon capacity $C$, we have
\begin{eqnarray}
\lim_{n \rightarrow \infty}\frac{1}{n} \log 
\frac{\binom{M}{K}/ \binom{L}{T}}{\sum_{i = T}^{K}\binom{K}{i}\binom{M-K}{K-i}} & \leq & C \nonumber
\end{eqnarray}

Furthermore, if $T = 1$, we have the tighter inequality
\begin{eqnarray}
\lim_{n \rightarrow \infty}\frac{1}{n} \log 
\frac{M}{KL} & \leq & C \nonumber
\end{eqnarray}

\end{theorem}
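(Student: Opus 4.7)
The plan is to reduce LED feasibility to a classical list decoding (LD) problem, where the standard Shannon/Fano converse applies, and to extract both inequalities from two different choices of the reduction.

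For the general inequality, I would first fix a maximal $T$-packing $\mathcal{A} \subseteq \binom{[M]}{K}$, that is, a maximal family of $K$-subsets with pairwise intersections of size strictly less than $T$. Maximality implies that every $K$-subset of $[M]$ lies in the ``$T$-intersection ball'' of some element of $\mathcal{A}$, so
\begin{eqnarray}
|\mathcal{A}| \cdot \sum_{i=T}^{K}\binom{K}{i}\binom{M-K}{K-i} \geq \binom{M}{K}. \nonumber
\end{eqnarray}
From the feasible LED code I would then build a classical LD code with message set $\mathcal{A}$: the encoder is $f_n$ restricted to $\mathcal{A}$, and on channel output $y^n$ the decoder outputs the list $\hat{\mathcal{A}}(y^n) = \{\Lambda' \in \mathcal{A} : |\Lambda' \cap g_n(y^n)| \geq T\}$. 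Whenever the LED decoding succeeds, the transmitted $\Lambda$ lies in $\hat{\mathcal{A}}(Y^n)$, so the derived code inherits the vanishing error probability.

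The crux, which I expect to be the only delicate combinatorial step, is the bound $|\hat{\mathcal{A}}(y^n)| \leq \binom{L}{T}$. For each $\Lambda' \in \hat{\mathcal{A}}(y^n)$ choose any $T$-subset $\tau(\Lambda') \subseteq \Lambda' \cap g_n(y^n)$, defining a map $\tau : \hat{\mathcal{A}}(y^n) \to \binom{g_n(y^n)}{T}$. If $\tau(\Lambda') = \tau(\Lambda'')$ for distinct $\Lambda', \Lambda'' \in \mathcal{A}$, then $\tau(\Lambda') \subseteq \Lambda' \cap \Lambda''$, contradicting $|\Lambda' \cap \Lambda''| < T$. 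Hence $\tau$ is injective and $|\hat{\mathcal{A}}(y^n)| \leq \binom{L}{T}$. Combining this with the classical LD converse $\log(|\mathcal{A}|/\binom{L}{T}) \leq nC + o(n)$ (a standard Fano argument, obtained from $I(\Lambda;Y^n) \leq nC$ and conditioning on the success event) and with the covering bound on $|\mathcal{A}|$ yields the general inequality after dividing by $n$ and letting $n \to \infty$.

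For the tighter $T=1$ inequality, specialize by taking $\mathcal{A}$ to be a partition of $[M]$ into $\lfloor M/K \rfloor$ pairwise disjoint $K$-blocks. Now $|\hat{\mathcal{A}}(y^n)| \leq L$, because each element of $g_n(y^n)$ lies in at most one block of $\mathcal{A}$, and $|\mathcal{A}| = \lfloor M/K \rfloor$. Invoking the same classical LD converse gives $\log(M/(KL)) \leq nC + o(n)$, which is the refined claim. The main obstacle is really the injective $\tau$ construction, which turns the packing property into a tight list-size bound; everything else is bookkeeping on top of standard list decoding machinery.
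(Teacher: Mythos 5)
Your reduction is the same one the paper uses---a maximal packing $\mathcal{A}$ of $K$-sets with pairwise intersections below $T$, the Gilbert/covering lower bound on $|\mathcal{A}|$, the injective map into $\binom{g_n(y^n)}{T}$ giving list size $\binom{L}{T}$, and the classical list-decoding converse---but there is a genuine gap at the step where you claim the derived LD code ``inherits the vanishing error probability.'' Feasibility of the LED family only guarantees that $\lambda_{avg}^{(n)}$, the error probability averaged over \emph{all} $\binom{M}{K}$ message sets, vanishes. Your LD code uses only the message sets in $\mathcal{A}$, which is an exponentially small fraction of $\binom{[M]}{K}$, and nothing prevents the LED code from performing badly on exactly those sets: the average over $\mathcal{A}$ can stay bounded away from zero even though the global average vanishes. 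Ordinary expurgation does not rescue this, since it only controls a constant fraction of messages, not a specific exponentially small subfamily chosen without reference to the code.

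The paper closes this hole by drawing a uniformly random permutation $\psi$ of $[M]$ and encoding message $i$ as $f_n\left(\psi\left(\mathcal{P}_i\right)\right)$. Because $\psi\left(\mathcal{P}_i\right)$ is then uniform over $\binom{\left[M\right]}{K}$, the error probability of the constructed LD code averaged over both the message and $\psi$ equals $\lambda_{avg}^{(n)}$ exactly, so for each $n$ some permutation achieves error at most $\lambda_{avg}^{(n)}\rightarrow 0$; the decoder is adjusted accordingly to test $\left|\psi^{-1}\left(g_n\left(\mathbf{y}^n\right)\right)\cap \mathcal{P}_i\right|\geq T$. Everything else in your argument---the injectivity of $\tau$ (which is the combinatorial content behind the paper's $\binom{L}{T}$ list-size bound), the padding to a fixed list size, the disjoint-blocks specialization for $T=1$ giving $|\mathcal{A}|=\lfloor M/K\rfloor$ and list size $L$---matches the paper and is sound. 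Add the symmetrization step and your proof goes through.
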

\begin{proof}
Suppose there exists a feasible family of $\left(M,K,L,T,n\right)$ LED codes for which
\begin{eqnarray}
\lim_{n \rightarrow \infty}\frac{1}{n} \log 
\frac{\binom{M}{K}/ \binom{L}{T}}{\sum_{i = T}^{K}\binom{K}{i}\binom{M-K}{K-i}} & > & C. \nonumber
\end{eqnarray}

Additionally assume, $\lim_{n \rightarrow \infty}\frac{1}{n} \log 
\frac{M}{KL}  >  C$ if $T=1$. We denote the encoding and decoding functions of the corresponding LED codes in this family as $\{f_n\}$ and $\{g_n\}$. Then one can construct a LD code 
as described below.

\emph{Code Generation:} 
\begin{enumerate}[i.]
\item Choose the largest set $\mathcal{P}  \subset \binom{\left[M\right]}{K}$ such that:
$\left| \mathcal{P}_i \cap \mathcal{P}_j \right| < T$ for any $\mathcal{P}_i, \mathcal{P}_j \in \mathcal{P}$, $i \neq j$. Note that, a direct application of Gilbert bound gives a lower bound on the size of $\mathcal{P}$ \cite{golomb}:
  \begin{eqnarray}
\left|\mathcal{P}\right| & \geq & \frac{\binom{M}{K}}{\sum_{i = T}^{K}\binom{K}{i}\binom{M-K}{K-i}}. \label{gilbert}
\end{eqnarray}
For $T = 1$, the cardinality of $\mathcal{P}$ can be calculated as $M/K$.
 
\item Generate a permutation, $\psi$ of $\left[M\right]$ randomly from the uniform distribution over  $\Psi_{\left[M\right]}$.
\item Fix the message set of the LD code as $\{1,\ldots, \left|\mathcal{P}\right|\}$.
\end{enumerate}

\emph{Encoding:} A message $i\in \{1,\ldots,\left|\mathcal{P}\right|\}$ is encoded as $f_n\left(\psi\left(\mathcal{P}_i\right)\right)$.

\emph{Decoding:} After observing the channel output $\mathbf{y}^n \in \mathcal{Y}^n$, the receiver generates the following list: 
\begin{eqnarray}
\mathsf{L} \left(\mathbf{y}^n\right) & = & \{i : \left|\psi ^{-1}\left(g_n\left(\mathbf{y}^n\right)\right) \cap \mathcal{P}_i\right|\geq T\}  \nonumber
\end{eqnarray}
Since the intersection of $\mathcal{P}_i,\mathcal{P}_j \in \mathcal{P}$, $i \neq j$ has size smaller than $T$ and $\left|\psi ^{-1}\left(g_n\left(\mathbf{y}^n\right)\right)\right|$ $=$  $L$ for all $\mathbf{y}^n$, we can conclude $\left|\mathsf{L}\left(\mathbf{y}^n\right)\right|\leq \binom{L}{T}$. For the case the list size is smaller than $\binom{L}{T}$, add $\binom{L}{T} - \left|\mathsf{L}\left(\mathbf{y}^n\right)\right|$ arbitrary messages to the list. Thus the final list size is fixed to $\binom{L}{T}$. 

\emph{Error Analysis:} We calculate the probability of error of the constructed LD code averaged both over the transmitted messages and the choice of $\psi$ by exploiting the symmetry introduced through the random permutation. If we denote the average probability of error for the LD code via $\lambda_{LD}^{\left(n\right)}$ and the average probability of error for the given $\left(M,K,L,T,n\right)$ LED code as $\lambda_{avg}^{\left(n\right)}$, we get
\begin{eqnarray}
E\left[\lambda_{LD}^{\left(n\right)}\right] & = & \lambda_{avg}^{\left(n\right)} \nonumber 
\end{eqnarray}
where the expectation is taken over the choice of $\psi$. 

Since $\lambda_{avg}^{\left(n\right)} \rightarrow 0$ as $n\rightarrow \infty$, we can conclude there exist at least one permutation  $\psi \in \Psi_{\left[M\right]}$ with corresponding error probability $\lambda_{LD}^{\left(n\right)} \rightarrow 0$. Note that the rate of the constructed LD codes are 
$R = \lim_{n\rightarrow \infty}\frac{1}{n}\log \left|\mathcal{P}\right|/\binom{L}{T}$ where $\left|\mathcal{P}\right|$ is the size of the message set and $\binom{L}{T}$ is the length of the list generated at the decoder. The converse result for LD codes indicates that, if $\lambda_{LD}^{\left(n\right)} \rightarrow 0$ as $n\rightarrow \infty$, then $R \leq C$ \cite{merhav, shgalber}.


Combining this with (\ref{gilbert}) we obtain:
\begin{eqnarray}
\lim_{n\rightarrow \infty}\frac{1}{n}\log \frac{\binom{M}{K}/ \binom{L}{T}}{\sum_{i = T}^{K}\binom{K}{i}\binom{M-K}{K-i}} &\leq & C. \nonumber
\end{eqnarray}
This leads to a contradiction and completes the proof.

For $T = 1$, we use $\left|\mathcal{P}\right|$ $=$ $M/K$ instead of the bound in (\ref{gilbert}) and the rest of the proof follows as above. 
\end{proof}

\begin{remark}
Theorem \ref{thmsc} and \ref{thmnc} suggest that the necessary and sufficient conditions presented depend on the channel only through its channel capacity $C$, and are applicable to all channels whose capacity can be determined. 
\end{remark}

\section{Feasibility, Rate and Capacity}
In this section we examine the asymptotic tightness of the bounds presented in the previous section under different regimes representing how $M,K,L,$ and $T$ increase with $n$. In particular, we assume 
 $K, L$ $\dot{<}$ $M$ and their exponential rates are finite, i.e. $\lim\limits_{n \rightarrow \infty}\frac{1}{n} \log M < \infty$.

 To simplify the presentation, we define the following quantities.
 
\begin{definition}
The \emph{rate} of a family of $\left(M,K,L,T,n\right)$ LED codes is defined by $\mathcal{R}$ $=$ $\lim_{n\rightarrow \infty}\frac{T}{n}\log \frac{MT}{KL}$.
\end{definition}

\begin{definition}
The {\em gap} for a family of $\left(M,K,L,T,n\right)$ LED codes is defined  by, 
\begin{eqnarray}
\mathcal{G} & = & \left\{
 \begin{array}{cl}
       0 & T = 1 \\ & \\
       \lim_{n\rightarrow \infty}\frac{T}{n}\log \frac{K}{T} & T > 1 \\
 \end{array} 
 \right.  \nonumber
 \end{eqnarray}
\end{definition}

\begin{corollary}
If $\mathcal{R} < C$  with $C > 0$ and finite, then there exists a feasible family of $\left(M,K,L,T,n\right)$ rate $\mathcal{R}$ LED codes for the  channel $\left(\mathcal{X},\mathcal{Y},\mathcal{W}\right)$ with capacity $C$.

\end{corollary}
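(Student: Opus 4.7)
The plan is to invoke Theorem~\ref{thmsc} directly by showing that $\mathcal{R}<C$ implies its combinatorial sufficient condition in the limit.

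First, lower-bound the sum in the denominator by keeping only its $i=T$ term, yielding
\[
\frac{\binom{M}{L}}{\sum_{i=T}^{\min\{K,L\}}\binom{K}{i}\binom{M-K}{L-i}}\;\leq\;\frac{\binom{M}{L}}{\binom{K}{T}\binom{M-K}{L-T}}.
\]
Next, evaluate the ratio $\binom{M}{L}/[\binom{K}{T}\binom{M-K}{L-T}]$ asymptotically. This ratio is the reciprocal of the hypergeometric probability $\Pr(|A\cap B|=T)$ for a random $L$-subset $B$ of $[M]$ intersected with a fixed $K$-subset $A$. Under the standing assumption $K,L\dot{<}M$, a direct expansion using the product form of binomial coefficients together with $M-K\,\dot{=}\,M$ gives the exponential-equivalence
\[
\frac{\binom{M}{L}}{\binom{K}{T}\binom{M-K}{L-T}}\;\dot{\leq}\;\frac{(M/K)^T}{\binom{L}{T}}.
\]
Finally, applying the elementary bound $\binom{L}{T}\geq(L/T)^T$ yields
\[
\frac{1}{n}\log\frac{(M/K)^T}{\binom{L}{T}}\;\leq\;\frac{T}{n}\log\frac{MT}{KL}\;\longrightarrow\;\mathcal{R}\;<\;C,
\]
so the condition of Theorem~\ref{thmsc} is met and a feasible family at rate $\mathcal{R}$ exists.

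The main obstacle is controlling the exponential-equivalence step uniformly in the various growth regimes of $T$, $K$, and $L$ with $n$: one must verify that the $T!$-type factors and sub-exponential Stirling corrections are absorbed by the $\tfrac{1}{n}\log$ limit under the standing hypotheses $K,L\dot{<}M$ and $\tfrac{1}{n}\log M<\infty$. Once this bookkeeping is done, the rest is algebra.
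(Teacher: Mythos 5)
Your reduction to the single term $i=T$ of the denominator is essentially the paper's argument for the case $0\leq\mathcal{R}<C$, and there it works (modulo the Stirling bookkeeping you flag). The genuine gap is that the corollary also covers $\mathcal{R}<0$, and there your key ``exponential-equivalence''
\[
\frac{\binom{M}{L}}{\binom{K}{T}\binom{M-K}{L-T}}\;\dot{\leq}\;\frac{(M/K)^{T}}{\binom{L}{T}}
\]
is false. The exact asymptotics of the hypergeometric term (Proposition~2 of the paper) carry a factor $e^{-\Delta}$ with $\Delta=\Theta(KL/M)$, so
\[
v_T \;=\;\frac{\binom{K}{T}\binom{M-K}{L-T}}{\binom{M}{L}}\;\approx\;\binom{L}{T}\Bigl(\tfrac{K}{M}\Bigr)^{T}e^{-\Theta(KL/M)}.
\]
When $\mathcal{R}<0$ we have $MT\,\dot{<}\,KL$, hence $KL/M$ grows \emph{exponentially} in $n$; the mean intersection size $KL/M$ of a random $L$-subset with a fixed $K$-subset then far exceeds $T$, and the probability of hitting \emph{exactly} $T$ is doubly exponentially small. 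Consequently $\frac{1}{n}\log(1/v_T)\to+\infty$, the single-term lower bound on the sum is exponentially lossy, and Theorem~\ref{thmsc}'s hypothesis cannot be verified this way. The paper handles this regime by a separate argument: it keeps the \emph{entire} sum and shows $\sum_{j=T}^{\min\{K,L\}}v_j=1-\sum_{j=0}^{T-1}v_j\;\dot{=}\;1$, because each $v_j$ with $j<T\,\dot{<}\,KL/M$ decays super-exponentially. You need an analogous second case.

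A secondary point: even in the regime $0\leq\mathcal{R}<C$, discarding the $e^{-\Theta(KL/M)}$ factor requires $KL/M=o(n)$, which you get when $M\,\dot{>}\,KL$ but which is delicate when $M\,\dot{=}\,KL$ with $KL/M$ growing polynomially; this is the same subtlety present in the paper's own proof, so it is not a defect specific to your write-up, but it is exactly the kind of ``bookkeeping'' your last paragraph defers and should be checked rather than assumed.
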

\begin{proof}
The proof is given in two parts.

\begin{enumerate}[1.]
\item $0$ $\leq$  $\mathcal{R}$ $<$ $C$:

Note in this case, $MT$ $\dot{\geq}$ $KL$ since $\mathcal{R}$ $\geq$ $0$, and $T = O\left(1\right)$ since $C$ is finite.

We define $v_j = \binom{K}{j}\binom{M-K}{L-j}/\binom{M}{L}$, and consider the term $v_T$. Using Proposition \ref{prop2} in the Appendix:
\begin{eqnarray}
   v_T &\dot{=}& 
       \frac{1}{\sqrt{T}}\left(\frac{KLe}{MT}\right)^T e^{-\Delta_1} , \label{cor1} 
       \end{eqnarray} 
       where  $\Delta_1  = \Theta\left(\frac{T^2}{K} + \frac{T^2}{L}+\frac{KL}{M}\right)$. 
       
       Since $T$ is bounded and $M$ $\dot{>}$ $KL$,
       \begin{eqnarray}
          v_T &\dot{=}& 
              \left(\frac{KL}{M}\right)^T.     \nonumber         
              \end{eqnarray} 
       The hypothesis of the corollary implies:
       \begin{eqnarray}
e^{-nC} &\dot{<} & v_T \nonumber \\
& \dot{<} & \sum\limits_{j=T}^{\min \{K,L\}}v_j. \label{cor33}
       \end{eqnarray}
   Finally (\ref{cor33}) and Theorem \ref{thmsc} guarantee the existence of a feasible family of codes.

\item $\mathcal{R}$ $<$ $0$:

For any  $j_0$ $=$ $\frac{KL}{M}e^{\alpha n}$ with some nonnegative  $\alpha$, such that $0$ $<$ $j_0$ $\dot{<}$ $K$ and $0$ $<$ $j_0$ $\dot{<}$ $L$, Proposition \ref{prop2} suggests that:
\begin{eqnarray}
   v_{j_0} &\dot{=}& 
       \frac{1}{\sqrt{j_0}}e^{\left(1-\alpha n\right)e^{\alpha n}\frac{KL}{M}} e^{-\Delta_2},
         \label{cor2}
       \end{eqnarray}
       where $\Delta_2 = \frac{KL}{M}\left(1+ o\left(1\right)\right)$.

This implies that if $\alpha$ is non-zero, then
\begin{eqnarray}
\lim_{n\rightarrow \infty} \frac{1}{n}\log v_{j_0} & = & \lim_{n\rightarrow \infty} \frac{\left(1-\alpha n - e^{-\alpha n}\right)j_0}{n} \nonumber \\
& = & -\infty. \label{mi}
\end{eqnarray}

Also note the following always holds under the condition $\mathcal{R} < 0$,
\begin{eqnarray}
\sum_{j=T}^{\min \{K,L\}} v_j & = & 1 - \sum_{j=0}^{T-1} v_j \label{id1} \\
& \dot{=} &  1 \quad  \dot{>} \quad e^{-nC}, \label{id2} 
\end{eqnarray}
for any $C>0$.
Here (\ref{id1}) follows from the identity $\sum_{j=0}^{\min \{K,L\}} v_j  =  1$, and (\ref{id2}) follows from the fact that all the terms in the sum decay faster than any first order exponential due to (\ref{mi}).

Finally, (\ref{id1}), (\ref{id2}) and Theorem \ref{thmsc} imply the existence of a feasible family of $\left(M,K,L,T,n\right)$ codes.

\end{enumerate}

\end{proof}

\begin{remark}
Unlike the rate of classical codes, the rate of LED codes may be negative.  When the rate is negative, which could happen for example if the list sizes at the encoder and/or decoder grow fast enough, we have feasibility for any channel whose Shannon capacity is non-zero.
\end{remark}

\begin{corollary}
If there exists a feasible family of $\left(M,K,L,T,n\right)$ rate $\mathcal{R}$ LED codes for a channel $\left(\mathcal{X},\mathcal{Y},\mathcal{W}\right)$ with capacity $C$, then $\mathcal{R} - \mathcal{G} \leq C$.
\end{corollary}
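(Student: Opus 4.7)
My plan is to deduce the corollary from the necessary condition in Theorem~\ref{thmnc} by an asymptotic reduction of its combinatorial ratio to the rate-minus-gap quantity $\mathcal{R}-\mathcal{G}$.

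The case $T=1$ is immediate: the second (sharper) inequality in Theorem~\ref{thmnc} gives $\lim_{n\to\infty}(1/n)\log(M/(KL)) \leq C$, and since $\mathcal{G}=0$ by definition and $\mathcal{R}$ specialises to $\lim_{n\to\infty}(1/n)\log(M/(KL))$ when $T=1$, we obtain $\mathcal{R}-\mathcal{G}=\mathcal{R}\leq C$ at once.

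For $T>1$ I start from the first inequality in Theorem~\ref{thmnc} and aim to show that its left-hand side is at least $\mathcal{R}-\mathcal{G}$ in first-order exponent. To lower bound that ratio I upper bound its denominator. Letting $X$ be a hypergeometric random variable with parameters $(M,K,K)$ (counting the overlap of two independent uniform $K$-subsets of $[M]$), a Markov-type bound applied to $\binom{X}{T}$ gives
\[
P(X \geq T) \;\leq\; E\!\left[\binom{X}{T}\right] \;=\; \frac{\binom{K}{T}^2}{\binom{M}{T}},
\]
so $\sum_{i=T}^K \binom{K}{i}\binom{M-K}{K-i} \leq \binom{M}{K}\binom{K}{T}^2/\binom{M}{T}$, whence
\[
\frac{\binom{M}{K}/\binom{L}{T}}{\sum_{i=T}^{K}\binom{K}{i}\binom{M-K}{K-i}} \;\geq\; \frac{\binom{M}{T}}{\binom{K}{T}^2\binom{L}{T}}.
\]
I would then apply the Stirling-type estimate $\binom{A}{B} \dot{=} (A/B)^B$, essentially a consequence of Proposition~\ref{prop2}, to each of the three binomial coefficients on the right. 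The product becomes $\dot{=}(MT^2/(K^2L))^T$, and taking $(1/n)\log$ and passing to the limit recovers exactly $\mathcal{R}-\mathcal{G}$. Chaining with Theorem~\ref{thmnc} delivers $\mathcal{R}-\mathcal{G}\leq C$.

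The main obstacle will be uniform control of the Stirling estimates across growth regimes. The equivalence $\binom{A}{B} \dot{=} (A/B)^B$ is clean when $B=o(n)$, but extra care is needed if $T$ grows linearly in $n$; in that case the existence of finite limits for $\mathcal{R}$ and $\mathcal{G}$ forces $K$, $L$, and $M$ to be $\Theta(n)$ as well, and one must invoke the entropy form $\binom{A}{B} \dot{=} e^{A h(B/A)}$ and verify that the exponent still collapses to $\mathcal{R}-\mathcal{G}$. When $\mathcal{R}\leq 0$ the inequality is automatic because $\mathcal{G}\geq 0$ by its definition and $C\geq 0$, so the substantive content of the corollary is the $\mathcal{R}>0$ regime.
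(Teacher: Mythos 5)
Your argument is correct in substance but takes a genuinely different route from the paper at the decisive step. Both proofs start from Theorem~\ref{thmnc} and both dispatch $T=1$ identically; the difference is in how the hypergeometric tail $\sum_{i=T}^{K}\binom{K}{i}\binom{M-K}{K-i}$ is controlled. The paper uses the unimodality of the terms (Proposition~\ref{prop1}), splits the sum at an intermediate index $J=e^{\alpha n}$, bounds each piece by its first term via Proposition~\ref{prop2}, and argues the far piece decays doubly exponentially; your first-moment bound $P(X\geq T)\leq E\bigl[\binom{X}{T}\bigr]=\binom{K}{T}^2/\binom{M}{T}$ replaces all of that with one closed-form inequality, eliminating the need for Proposition~\ref{prop1} and the choice of $J$ altogether --- a genuine simplification. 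Two points need tightening. First, the accurate single-binomial estimate is $\binom{A}{B}\dot{=}\left(Ae/B\right)^B$ (up to polynomial and $e^{O(B^2/A)}$ factors), not $\left(A/B\right)^B$; since your ratio has one binomial in the numerator and three in the denominator, an $e^{-2T}$ factor survives, so $\binom{M}{T}\big/\bigl(\binom{K}{T}^2\binom{L}{T}\bigr)\dot{=}\left(MT^2/(e^2K^2L)\right)^T$ and the normalized exponent is $\mathcal{R}-\mathcal{G}-2\lim_{n\rightarrow\infty}T/n$. This is harmless when $T=o(n)$ but not automatically when $T=\Theta(n)$. Second, your claim that finiteness of $\mathcal{R}$ and $\mathcal{G}$ forces $M$ and $L$ to be $\Theta(n)$ is not correct: they may both be exponentially large with $M/L$ bounded. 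The cleaner repair, which is essentially the paper's case (ii), is to note that if $T\rightarrow\infty$ and $MT^2\dot{>}K^2L$, then your lower bound on the left-hand side of Theorem~\ref{thmnc} grows superexponentially (an exponentially large base raised to the power $T$ overwhelms the $e^{-2T}$ correction), contradicting the bound by finite $C$; hence $MT^2\dot{\leq}K^2L$ and $\mathcal{R}-\mathcal{G}\leq 0\leq C$ in that regime. With these two repairs your proof is complete and arguably more elementary than the paper's.
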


\begin{proof}
Note for $T$ $=$ $1$, the proof immediately follows from Theorem \ref{thmnc}. Also for $MT$ $\dot{\leq}$ $K^2$, $\mathcal{R} - \mathcal{G}$ $\leq$ $0$, and the Corollary follows. Hence for the rest of the proof we assume $MT$ $\dot{>}$ $K^2$.

First define 
\begin{eqnarray}
J  & = & \left\{
	\begin{array}{ll}
		e^{\alpha n} & \mbox{if } K \dot{>} 0 \\
		K & \mbox{otherwise }
	\end{array} \nonumber
\right.
\end{eqnarray}
for any $\alpha$ such that $K$ $\dot{\geq}$ $J$ $\dot{\geq}$ $T$.

Then Theorem \ref{thmnc} suggests
\begin{eqnarray}
 e^{-nC} & \dot{\leq} & \binom{L}{T}\sum\limits_{j=T}^{K}w_j  \nonumber
\end{eqnarray}
where $w_j$ $=$ $\frac{\binom{K}{j}\binom{M-K}{K-j}}{\binom{M}{K}}$. By Proposition \ref{prop1} and the assumption $MT$ $\dot{>}$ $K^2$, $w_j$ is a decreasing sequence for large $n$. Therefore
\begin{eqnarray}
 e^{-nC} & \dot{\leq} & \binom{L}{T}\sum\limits_{j=T}^{J}w_j + \binom{L}{T}\sum\limits_{j=J}^{K}w_j \nonumber \\
  & \dot{\leq} & \binom{L}{T}J w_T + \binom{L}{T}Kw_{J}. \nonumber
\end{eqnarray}
By applying Proposition \ref{prop2} and (\ref{bin}), (\ref{expo}) in the Appendix, we get
\begin{eqnarray}
 e^{-nC} & \dot{\leq} & \frac{J}{T}\left(\frac{K^2L}{MT^2}\right)^T + \frac{KL^T}{T}\left(\frac{K^2}{MT^2}\right)^{J} \label{cor3inn}
\end{eqnarray}
Note the second term in the sum in (\ref{cor3inn}) has a double exponential decay, thus:
\begin{eqnarray}
 e^{-nC} & \dot{\leq} & \frac{J}{T}\left(\frac{K^2L}{MT^2}\right)^T.\label{cor3in}
\end{eqnarray}
\begin{enumerate}[i.]
\item If $T$ $=$ $O\left(1\right)$,  then we can choose $\alpha$ arbitrarily small and  (\ref{cor3in}) simplifies to:
\begin{eqnarray}
 e^{-nC} & \dot{\leq} & \left(\frac{K^2L}{MT^2}\right)^T. \nonumber
\end{eqnarray}
which leads to $\mathcal{R}-\mathcal{G} \leq C$.

\item Otherwise for $MT^2 \dot{>} K^2L$,  
\begin{eqnarray}
\lim\limits_{{n\rightarrow \infty}} \frac{1}{n}\log {\frac{J}{T}\left(\frac{K^2L}{MT^2}\right)^T} &=& -\infty \nonumber
\end{eqnarray}
 and (\ref{cor3in})  cannot be true. This implies that if $T \rightarrow \infty$, then $MT^2$ $\dot{\leq}$ $K^2L$  and $\mathcal{R}-\mathcal{G} \leq 0$.
\end{enumerate}
\end{proof}

\begin{remark}
Corollary 1 and 2 suggest that the necessary and sufficient conditions of Section \ref{sec3} become tight for the special case $T = 1$. Hence for $T = 1$, the existence of a feasible family of LED codes can be determined by simply comparing the code rate (as in Definition 4) with the Shannon capacity of the channel. 

Moreover, for feasibility the list sizes at the terminals are transferable  as long as their product is conserved, since the rate only depends on the product $KL$.
\end{remark}

\section{Conclusions}
In this paper, we have studied how to communicate a chosen subsets of messages over a noisy channel. In our treatment, we have provided fundamental limits for feasibility in terms of number of messages $\left(M\right)$,  the list size at transmitter $\left(K\right)$, at the receiver $\left(L\right)$, overlap threshold $\left(T\right)$, and the Shannon capacity of the channel $\left(C\right)$.  
Specifically, we have derived necessary and sufficient conditions for asymptotically error free communication and argued that they depend on the channel only through its capacity. Our results have illustrated that for the special case $T = 1$ the necessary and sufficient conditions coincide and communication is possible only when rate, defined as  $\mathcal{R} = \lim_{n\rightarrow \infty}\frac{1}{n}\log \frac{M}{KL}$ is smaller than capacity. 


\appendix

In this appendix we examine the asymptotic behavior of the sequence $v_j = \binom{K}{j}\binom{M-K}{L-j}/\binom{M}{L}$ for $j=0,\ldots, \min \{K,L\}$. For the following we assume, as in the previous sections, $M$ exponentially increases with $n$, $M\dot{>} K$ and $M \dot{>} L$.

\begin{prop}\label{prop1}
There exists a $n_0 \in \mathbb{N}$ such that the sequence $\{v_j\}$ is either unimodal or decreasing in $j$ for all $n > n_0$. If the sequence is unimodal, its maximum is reached for some $j^*$ $\dot{=}$ $KL/M$. 
\end{prop}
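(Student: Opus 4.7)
The plan is to work with the consecutive ratio
\[
r_j \;:=\; \frac{v_{j+1}}{v_j} \;=\; \frac{(K-j)(L-j)}{(j+1)\,(M-K-L+j+1)},
\]
which follows directly from the product form of the binomial coefficients in $v_j$. On the valid range $j \in \{0,\ldots,\min\{K,L\}-1\}$ the numerator is a product of two positive integer factors $K-j, L-j \geq 1$ that both strictly decrease with $j$, hence the numerator itself is strictly decreasing in $j$. The denominator is strictly increasing, because $j+1$ obviously grows and, under the paper's standing assumption $K,L \dot{<} M$, we have $K+L \dot{<} M$, so $M-K-L+j+1 > 0$ and grows with $j$ for every $n$ larger than some $n_0$. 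Therefore $r_j$ is strictly decreasing in $j$ on its whole range.

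The dichotomy in the proposition now follows directly. If $r_0 \le 1$, then $r_j < 1$ for all $j \ge 1$, and $\{v_j\}$ is strictly decreasing. Otherwise $r_0 > 1$, and by strict monotonicity of $r_j$ there is a unique index $j^{\star} \in \{1,\ldots,\min\{K,L\}-1\}$ with $r_{j^{\star}-1} \geq 1 > r_{j^{\star}}$; then $\{v_j\}$ strictly increases on $\{0,\ldots,j^{\star}\}$ and strictly decreases on $\{j^{\star},\ldots,\min\{K,L\}\}$, so it is unimodal with mode $j^{\star}$.

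To localise $j^{\star}$ in the unimodal case, I would set $r_j = 1$ and simplify: the $j^{2}$ terms cancel and one obtains $(M+2)\,j = KL - (M-K-L+1)$, so the real root is $\tilde{j} = [KL-(M-K-L+1)]/(M+2)$, and the integer mode $j^{\star}$ agrees with $\tilde{j}$ up to rounding by at most $1$. Under $K,L \dot{<} M$ the factor $M/(M+2)$ is subexponential, the additive correction $(M-K-L+1)/(M+2)$ is $\dot{=} 1$, and the rounding error is bounded by $1$; pooling these negligible factors yields $j^{\star} \dot{=} KL/M$, as claimed. The main obstacle is the bookkeeping in the moderate regime $KL/M = \Theta(1)$, where $j^{\star}$ is itself of constant order: one must check that the $\dot{=}$ identification is still honest (both sides being subexponentially related) and that the threshold condition on $r_0$ cleanly separates the unimodal and decreasing cases across all the growth regimes of $K$ and $L$ admitted by the paper.
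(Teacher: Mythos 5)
Your proposal is correct and follows essentially the same route as the paper: both arguments rest on the consecutive ratio $v_{j+1}/v_j = \frac{(K-j)(L-j)}{(j+1)(M-K-L+j+1)}$ being strictly decreasing, which yields the decreasing/unimodal dichotomy according to whether the ratio at $j=0$ is below or above $1$. The only cosmetic difference is in locating the mode: you solve $r_j=1$ exactly (noting the quadratic terms cancel), while the paper evaluates the asymptotics of the ratio at the candidate scales $j_0 \dot{=} KL/M$ and $j_1 \dot{>} KL/M$; both correctly give $j^* \dot{=} KL/M$.
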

\begin{proof}
First define,
\begin{eqnarray}
a_j & = & \frac{v_{j+1}}{v_j} \nonumber \\
 & = & \frac{\left(K-j\right)\left(L-j\right)}{\left(M-K-L+j+1\right)\left(j+1\right)}. \nonumber
\end{eqnarray}
Since $\{a_j\}$ is a decreasing sequence in $j$, and $a_s = 0$ for $s$ $=$ $\min\{K,L\}$; $v_j$ is decreasing if $a_0$ is smaller than $1$ and unimodal if $a_0$ is larger than $1$.  

Considering the following limits:
\begin{eqnarray}
\lim\limits_{n \rightarrow \infty} a_0 & = &  \lim\limits_{n\rightarrow \infty} \frac{KL}{M} \nonumber  \\ 
\lim\limits_{n \rightarrow \infty} a_{j_0} & = & \lim\limits_{n\rightarrow \infty} \frac{KL}{Mj_0} \quad = \quad 1  \nonumber \\
\lim\limits_{n \rightarrow \infty} a_{j_1} & = & \lim\limits_{n\rightarrow \infty} \frac{j_0}{j_1} \qquad = \quad 0 \nonumber 
\end{eqnarray}
for any $j_0$  $\dot{=}$  $\frac{KL}{M}$ and $j_1$ $\dot{>}$ $\frac{KL}{M}$.

We can argue the existence of $n_0$ such that for all $n > n_0$,  $v_j$ is decreasing if $M \dot{>} KL$, and unimodal with the maximum at some $j^*$ $\dot{=}$ $j_0$ otherwise.
\end{proof}

\begin{prop}\label{prop2}
For any $0$ $<$ $j$ $<$ $K$ and $j$ $<$ $L$: 
\begin{eqnarray}
v_j & \dot{=} & \frac{1}{\sqrt{j}}\left(\frac{KLe}{Mj}\right)^je^{-\Delta} \qquad \textrm{where} \nonumber
\end{eqnarray} 
$
\Delta$ $=$ $\sum\limits_{k=1}^{\infty} \frac{1}{k\left(k+1\right)}\left(\frac{\left(K+L-j\right)^{k+1} - K^{k+1} - L^{k+1}}{M^k}+ \frac{j^{k+1}}{K^k} + \frac{j^{k+1}}{L^k}\right)$.

\end{prop}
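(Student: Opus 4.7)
The plan is to apply Stirling's formula $n! = \sqrt{2\pi n}(n/e)^n(1+O(1/n))$ to each of the nine factorials in $v_j = K!(M-K)!L!(M-L)!/[j!(K-j)!(L-j)!(M-K-L+j)!M!]$, and then to Taylor-expand the resulting $n\log n$ contributions as power series. The $(1+O(1/n))$ corrections are subexponential and are absorbed by $\dot{=}$. A direct bookkeeping check shows that the coefficients of the linear $-n$ terms coming from $(n/e)^n$ in numerator versus denominator sum to zero, so the $e^{\pm n}$ contributions cancel exactly.

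For the polynomial prefactor, the $\sqrt{2\pi n}$ pieces combine into $(2\pi)^{-1/2}\sqrt{K(M-K)L(M-L)/[j(K-j)(L-j)(M-K-L+j)M]}$. Under the standing hypotheses $j \dot{<} K,L$ and $K,L \dot{<} M$ (inherited from the appendix preamble), one has $K-j \dot{=} K$, $L-j \dot{=} L$, and each of $M-K$, $M-L$, $M-K-L+j$ is $\dot{=} M$. The numerator and denominator in the square root then balance except for the single factor of $j$ in the denominator, so this prefactor is $\dot{=} 1/\sqrt{j}$, matching the leading factor in the claim.

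The substantive step is the identity
\begin{equation*}
N\log N - (N-x)\log(N-x) = x\log N + x - \sum_{k=1}^{\infty}\frac{x^{k+1}}{k(k+1)N^k},
\end{equation*}
derived from $(1-t)\log(1-t) = -t + \sum_{k\geq 1} t^{k+1}/[k(k+1)]$, which itself comes from integrating $-\log(1-t) = \sum_{k\geq 1} t^k/k$. I would apply this identity to the five pairs $(N,x) = (K,j)$, $(L,j)$, $(M,K)$, $(M,L)$, and $(M,K+L-j)$, so that every $(N-x)\log(N-x)$ piece appearing in $\log v_j$ is replaced by closed-form data. The $x\log N$ contributions collapse to $j\log K + j\log L - j\log M$, which together with $-j\log j$ yield $j\log(KL/(Mj))$; the constant $+x$ contributions collect to $j+j-j = j$, accounting for the factor $e^j$ inside $(KLe/(Mj))^j$; and the five residual series pieces collect into $-\Delta$.

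The main bookkeeping obstacle is keeping the signs straight when combining the three $M$-based expansions: $(M-K)!$ and $(M-L)!$ sit in the numerator of $v_j$ while $(M-K-L+j)!$ and $M!$ sit in the denominator, and a careful tracking shows the series corrections from these four factorials combine to $-[(K+L-j)^{k+1} - K^{k+1} - L^{k+1}]/[k(k+1)M^k]$; the pairs $(K,j)$ and $(L,j)$ each contribute $-j^{k+1}/[k(k+1)K^k]$ and $-j^{k+1}/[k(k+1)L^k]$ directly, which together reproduce $-\Delta$ as stated.
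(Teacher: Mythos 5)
Your proof is correct and takes essentially the same route as the paper's: both apply Stirling's approximation to the factorials in $v_j$ and then use the Taylor series of $\log(1-x)$ --- in your case via the integrated identity $(1-t)\log(1-t) = -t+\sum_{k\geq 1} t^{k+1}/(k(k+1))$, in the paper's case via factors of the form $(1-A/B)^{C}$, which after substituting $C=B-A$ yield exactly the same $1/(k(k+1))$ coefficients --- to extract the leading term $\frac{1}{\sqrt{j}}(KLe/(Mj))^j$ and the correction $e^{-\Delta}$. Your sign bookkeeping, the cancellation of the linear Stirling terms, and the reduction of the square-root prefactor to $1/\sqrt{j}$ all check out against the paper's computation.
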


\begin{proof}
For any $A,B,C \in \mathbb{N}$:
\begin{eqnarray}
\binom{A}{B} & = & \frac{A^{A+1/2}\quad  e^{O\left(\frac{1}{B} + \frac{1}{A-B}\right)}}{B^{B+1/2}\left(A-B\right)^{A-B+1/2}}, \label{bin} 
\end{eqnarray}
\begin{eqnarray}
\left(1-\frac{A}{B}\right)^C & = & e^{C\log \left(1 - A/B\right)} =  e^{-C \sum\limits_{k=1}^{\infty}\frac{A^k}{B^k}}. \label{expo}
\end{eqnarray}
Here (\ref{bin}) follows from successive application of Sterling approximation \cite{asym}, and (\ref{expo}) follows from the Taylor expansion of $\log \left(1-x\right)$. 

The proposition simply follows from application of those identities on $v_j$. By noting  $\left(1-\frac{A}{B}\right)^C$ $\dot{=}$ $1$ for $B$ $\dot{>}$ $AC$, we have 
\begin{eqnarray}
v_j & \dot{=} & \frac{1}{\sqrt{j}}\left(\frac{KL}{Mj}\right)^j \times \ldots \nonumber \\
 & &\times   \frac{\left(1-\frac{K}{M}\right)^{M-K}\left(1-\frac{L}{M}\right)^{M-L}}{\left(1-\frac{K+L-j}{M}\right)^{M-K-L+j}\left(1-\frac{j}{K}\right)^{K-j}\left(1-\frac{j}{L}\right)^{L-j}} \nonumber \\
& \dot{=} & \frac{1}{\sqrt{j}}\left(\frac{KL}{Mj}\right)^j e^{j-\Delta} \quad = \quad  \frac{1}{\sqrt{j}}\left(\frac{KLe}{Mj}\right)^j e^{-\Delta}. \nonumber
\end{eqnarray}
\end{proof}


\end{document}